\newtheorem{theorem}{Theorem}[section]
\newtheorem{corollary}{Corollary}
\newtheorem{lemma}[theorem]{Lemma}
\newtheorem{proposition}{Proposition}
\theoremstyle{definition}
\newtheorem{remark}{Remark}
\def\R{\mathbb{R}}
\let\e=\varepsilon
\def\dev{\operatorname{div}}
\def\id{i.~e.~}
\newcommand{\dk}{\nabla_{v_k}}
\newcommand{\dl}{\nabla_{v_l}}
\newcommand{\V}{\text{V}}
\newcommand{\fj}{f_j^N}
\def\cal{\mathcal}
\title{On the Kac model for the Landau equation}
\author{Evelyne Miot\footnote{miot@ann.jussieu.fr},  Mario Pulvirenti\footnote{pulvirenti@mat.uniroma1.it} and Chiara Saffirio\footnote{saffirio@mat.uniroma1.it}}
\begin{document}
\maketitle

\centerline{\scshape Evelyne Miot }
\medskip
{\footnotesize
 \centerline{Laboratoire de Math\'ematiques, Universit\'e Paris-Sud 11, b\^at. 425, }
   \centerline{91405 Orsay, France}
} 

\medskip

\centerline{\scshape Mario Pulvirenti }
\medskip
{\footnotesize
 \centerline{ Dipartimento di
Matematica Guido Castelnuovo, Universit\`a La Sapienza - Roma,}
   \centerline{ P.le A. Moro, 5 00185 Roma, Italy}
} %

\medskip

\centerline{\scshape Chiara Saffirio }
\medskip
{\footnotesize
 \centerline{ Dipartimento di
Matematica Guido Castelnuovo, Universit\`a La Sapienza - Roma,}
   \centerline{ P.le A. Moro, 5 00185 Roma, Italy}
} %

\bigskip


\begin{abstract}
We introduce a $N$-particle system which approaches, in the mean-field limit, the solutions
of the Landau equation with Coulomb singularity.
This model plays the same role as the Kac's model for the homogeneous Boltzmann equation.
We use compactness arguments following \cite {V1}.
\end{abstract}

\section{Introduction}
In 1954 M. Kac \cite{Kac}, in the attempt of clarifying some aspects of the transition
from particle systems to the Boltzmann equation,
introduced a toy model which has been successively investigated. See for instance \cite{pey} and references quoted therein.

Roughly speaking the Kac's model consists in a $N$-particle system.
The particles have no position but only velocities denoted by $V_N=(v_1,\ldots,v_N)\in \R^{3N}$.
The dynamics is the following stochastic process. At a random time, pick a pair of particles, say $i$ and $j$, and perform the transition
$$
v_i,v_j \to v_i', v_j'
$$
preserving total momentum and energy.

More precisely, if $W^N=W^N(V_N,t)$ is a symmetric probability distribution describing a statistical
state of the system, the time evolution is given by the following master equation
\begin{equation}
\label{Kac 1}
\begin{split}
\partial_t W^N ={\cal{L}}_N W^N
\end{split}
\end{equation}
where
\begin{equation}
\label{gen 1}
 \begin{split}
{\cal{L}}_N W^N=\frac{1}{2N}\sum_{i\neq j}&\int \,dv_i'  \,dv_j'\,   K(v_i,v_j| v_i',v_j')
\delta( v_i+v_j-v_i'-v_j') \delta( v_i^2+v_j^2-v_i'^2-v_j'^2 )\\
&\{W^N(v_1,\ldots, v_{i}',\ldots, v_{j}',\ldots, v_N)-W^N(v_1,\ldots, v_N)\},
\end{split}
\end{equation}
and $K$ is a suitable kernel.

Introducing the exchanged momentum $p=v_i'-v_i=v_j-v_j'$ in the collision process and assuming that
\begin{equation}
K(v_i,v_j| v_i',v_j')=w(p)
\end{equation}
for some smooth and radially symmetric $w$, we readily arrive to
\begin{equation}
\label{gen 2}
\begin{split}
{\cal{L}}_N W^N=\frac{1}{2N}\sum_{i\neq j}&\int dp\, w(p)
\delta\left( p^2-p\cdot (v_i-v_j)\right)\\
&\{W^N(v_1,\ldots, v_{i}+p,\ldots, v_{j}-p, \ldots, v_N)-W^N(v_1,\ldots, v_N)\}.
\end{split}
\end{equation}
In \cite{Kac} it was shown that  the first marginal of $W^N$
converges, in the limit $N \to \infty$, to
the solution to the (homogeneous) Boltzmann equation if the initial datum is chaotic, \id if $W^N(0)= f_0^{\otimes N}$
for some probability distribution $f_0$. Moreover,  the $j$-particle marginal converge to the $j$-fold product of such solution, i.e., propagation
of chaos holds (see \eqref{propa} below).

The main purpose of the present paper is to introduce an analogous model for
the Landau equation with Coulomb interaction.
A straightforward way to derive this model is to perform the so-called grazing collision limit
on eq.n \eqref{Kac 1} as we shall do in a moment.
In fact in 1936 Landau \cite{LL}, starting from the
Boltzmann
collision operator,  derived a new kinetic equation for the time evolution of a dense charged plasma,
exploiting the fact that,
in this physical context,
only the grazing collisions ($p\approx 0$) are relevant.
According to such a prescription, we introduce $\e>0$ a small parameter
and scale the kernel of ${\cal{L}}_N$ in eq.n  \eqref{gen 2}
as
$$
w(p) \to \frac 1{\e^3} w\left(\frac p {\e}\right)
$$
so that
\begin{equation}
\label{gen 3}
\begin{split}
{\cal{L}}_N^\e W^N=\frac{1}{2N\e^4 }\sum_{i\neq j}&\int \,dp\, w\left(\frac p {\e}\right)
\delta\left( p^2-p\cdot (v_i-v_j)\right)\\
&\{W^N(v_1,\ldots, v_{i}+p,\ldots, v_{j}-p, \ldots, v_N)-W^N(v_1,\ldots, v_N)\}.
\end{split}
\end{equation}
Note that we inserted another factor $1/\e$ in front of the collision operator, to take into account the large
density of the plasma.

Now, for fixed $N$, we perform the limit $\e \to 0$. By a straightforward formal computation (change of variables and Taylor expansion),
we readily detect the limiting generator which is the following diffusion operator:
\begin{equation}
\label{gen 4}
\begin{split}
\tilde{L}^N =\dev_{V_N}(B\cdot \nabla_{V_N}).
\end{split}
\end{equation}
Here
\begin{equation*}
 B : \R^{3N}\to \R^{3N\times3N}
\end{equation*}
is a matrix
 defined in the following way. For $V_N=(v_1,\ldots,v_N)\in \R^{3N}$,
\begin{equation*}
\begin{cases}
\displaystyle B_{i,j}(V_N)=-\frac{a(v_i-v_j)}{N} \ \ \ \text{if} \ \ i\neq j,\\
\displaystyle B_{i,i}(V_N)=\frac{1}{N}\sum_j a(v_i-v_j),
\end{cases}
\end{equation*}
where the $3\times3 $ matrix $a$ is given by
\begin{equation}\label{a}
a(w)=\frac{1}{|w|}(\mathbb{I}-\hat{w}\otimes\hat{w})=\frac{1}{|w|}P(w ),
\quad w\in \R^3,\quad \text{and } \hat{w}=\frac{w}{|w|},
\end{equation}
with $P(w)$ the orthogonal projection on the plane orthogonal to $w$.

Unfortunately the elliptic operator $\tilde{L}^N$ has two main disadvantages.
First it is not uniformly elliptic (see Lemma \ref{lemma:positive} below), second it is not smooth due to the
divergence for $|v_i-v_j|\approx 0$.

As a matter of fact, since we want a handier N-particle model to start with, we slightly modify $\tilde{L}^N$ to obtain
a smooth and non-degenerate operator. More precisely, we define
\begin{equation}\label{L^N}
L^N=\dev_{V_N}(B^N\nabla_{V_N})
\end{equation}
where $B^N$ is obtained by making the matrix $B$ smooth and bounded from below:
\begin{equation}\label{B_N}
\begin{cases}
\displaystyle B_{i,j} ^N (V_N)=-\frac{a^N(v_i-v_j)}{N} \ \ \ \text{if} \ \ i\neq j,\\
\displaystyle B_{i,i} ^N (V_N)=\frac{1}{N}\sum_j a^N(v_i-v_j)+\frac{1}{N}.
\end{cases}
\end{equation}
Here the $3\times3 $ matrix $a^N$ is obtained by replacing $\frac{1}{|w|}$ by
$\bar{\chi}_{\frac{1}{N}}(|w|)\frac{1}{|w|}$ in \eqref{a}, defining
\begin{equation}
\label{cut-off}
\chi_{\frac{1}{N}} \in C^\infty (\R^+),  \quad  \chi_\frac{1}{N}(r) =1 \quad \text {if}  \quad r<\frac{1}{N} ,
\quad \chi_{\frac{1}{N}} (r)=0 \quad \text {if}  \quad r> \frac{2}{N},
\end{equation}
and $\bar{\chi}_N=(1-\chi_N)$.
Now the evolution equation assumes the form
\begin{equation}\label{Kac 3}
\partial_t W^N=\dev_{V_N}(B^N\nabla_{V_N}W^N)
\end{equation}
and the well-known theory of linear parabolic equations assures the
existence of a unique classical solution for $L^1$ initial data.

To simplify the notations we define
\begin{equation*}
\frac{1}{|w|_N} := \bar{\chi}_{\frac{1}{N}}(|w|)\frac{1}{|w|}
\end{equation*}
so that $a^N(w)=\frac{1}{|w|_N}P(w)$.

\medskip

In the limit $N\to \infty$, the number of variables in the definition of $W^N$ diverges, hence we will actually prefer to
look at the asymptotic behavior of the \emph{marginal distributions}
\begin{equation*}
 f_j^N(v_1,\ldots,v_j,t)=\int\,dv_{j+1}\ldots dv_N\, W^N(v_1,\ldots,v_N,t),\quad j=1,\ldots,N.
\end{equation*}
Note that $f_N^N=W^N$ and the $j$-th marginal distribution is a function of $j$ variables.
Moreover, using  \eqref{Kac 3} we can express
the evolution of each $\fj$ in terms of $f_{j+1}^N$. Straightforward computations
lead to the following system of equations, called the $N$-particle \emph{hierarchy}
\begin{equation}
\label{eq:hierarchy-N}
\partial_t \fj=L_j^N\fj+\frac {N-j} N C_{j+1} ^N f_{j+1}^N,\quad j=1,\ldots,N-1
\end{equation}
where $L_j^N$ and $C_{j+1} ^N$ are operators defined  by:
\begin{equation}
\label{coll}
\begin{split}
 L_j ^Nf_j ^N&
=\frac{1}{N}\sum_{\substack{k\neq l\\k,l=1}}^j\dk\cdot\left[a_{k,l} ^N\cdot (\dk \fj-\dl\fj)\right]+\frac{1}{N}\sum_{\substack k=1} ^j \Delta_{v_k}f_j ^N,\\
C_{j+1}^N f_{j+1} ^N&
=\sum_{k=1}^{j}\dk\cdot\int dv_{j+1}\,a_{k,{j+1}} ^N\cdot \left(\dk f_{j+1} ^N-\nabla_{v_{j+1}}f^{N} _{j+1}\right).
\end{split}
\end{equation}
In particular we have $L_N^N=L^N$.

Since $C_j=O(1)$, while $ L_j ^N f_j ^N=O(\frac j N)$, the formal limit of \eqref{eq:hierarchy-N}
as $N\to \infty$ yields an infinite system of equations called \emph{Landau hierarchy}
\begin{equation}
\label{eq:hierarchy}
 \partial_t f_j=C_{j+1}f_{j+1},\quad j=1,\ldots,+\infty,
\end{equation}
where the operators $C_{j+1}$ write
\begin{equation*}
 C_{j+1} g
=\sum_{k=1}^{j}\dk\cdot\int dv_{j+1}\,a_{k,{j+1}} \cdot \left(\dk g-\nabla_{v_{j+1}}g \right).
\end{equation*}

Due to the structure of the collision operator $C_{j+1}$, we realize that special solutions to eq.n \eqref{eq:hierarchy}
are given by factorized states
\begin{equation}\label{propa}
f_j (v_1 \ldots v_j,t)=\prod _{i=1}^j f (v_i,t)=f(t)^{\otimes j}
\end{equation}
where the one particle distribution $f(t)$ solves the Landau equation
\begin{equation}
\label{eq:landau}
\begin{split}
\partial_t f=Q(f,f),
\end{split}
\end{equation}
with
\begin{equation}\label{gen4}
Q(f,f)(v)=\int_{\R^3} dw \,a(v-w)\cdot (f(w)\nabla f(v)-f(v)\nabla f(w)).
\end{equation}
It should be mentioned that, conversely, if $f$ is a solution to eq.n \eqref{eq:landau}, then the products $f_j=f^{\otimes j}$ solve the hierarchy
\eqref{eq:hierarchy}.

\medskip

Following the general paradigm of the kinetic theory, we expect that propagation of chaos holds,
namely that \eqref{propa} holds for all time provided that the initial state is chaotic, i.e. \eqref{propa} is initially verified.

Actually, we are not able to show propagation of chaos. We will be able to prove only the (weak) convergence
$f_j ^N(t)  \to f_j (t)$ (for suitable subsequences), being $f_j(t)$ a weak solution of the
Landau hierarchy \eqref{eq:hierarchy}, without knowing whether $f_j (t)$ factorizes even though it does at time zero.
The reason is that we have a poor control on the limiting hierarchy as well as on
the Landau equation \eqref{eq:landau}. In fact, we will obtain a solution to eq.n \eqref{eq:hierarchy} by adapting, to the present
$N$-particle context, a strategy, based on compactness arguments, introduced by C. Villani \cite{V1}
for the Landau equation. As a matter of fact we do not have uniqueness, which is a necessary condition to get propagation of
chaos. Indeed, assume that $f(t)$ and $g(t)$ are two weak solutions to eq.n \eqref{eq:landau}, with the same initial datum $f_0$.
It follows that
$$
f_j(t)=\lambda f(t)^{\otimes j}+(1-\lambda) g(t)^{\otimes j},\quad \lambda \in( 0,1)
$$
solves the Landau hierarchy with the chaotic initial datum $f_0^{\otimes j}$, but does not factorize.

\medskip

Before stating our main result, we make some assumptions on the initial a.c. measures $W^N(0)$:

\begin{enumerate}
\item $W^N(0)\geq 0$;
\item $W^N(0)$ is symmetric in the variables $v_1,\ldots,v_N$;
\item The following uniform bounds hold
\begin{equation*}
\begin{split}
\hspace*{-2em}&\int\,dV_N\, W^N(0)=1,\\
\hspace*{-2em}&\frac{1}{N}\int \,dV_N\, W^N(0)\log(W^N(0))\leq C,\\
\hspace*{-2em}&\frac{1}{N}\int \,dV_N\,W^N(0)|V_N|^2\leq C.
\end{split}
\end{equation*}
\end{enumerate}
These properties still hold true at positive times. Actually
$$
\int\,dV_N\, W^N(t)|V_N|^2=\int \,dV_N\,W^N(0)|V_N|^2+\frac{C}{N}t
$$
expresses the energy dissipation and follows easily by an integration by parts in eq.n \eqref{Kac 3}.
Moreover
$$
\int \,dV_N\,W^N(t)\log(W^N(t))\leq \int\,dV_N\, W^N(0)\log(W^N(0))
$$
expresses the entropy dissipation and will be discussed in the next section.

We now explain  what is the sense  we give to eq.n \eqref {eq:hierarchy}.
The main difficulty related to the Landau equation is due to the divergence of
the matrix $a(w)$ when $|w|$ is small. Indeed if $f_{j+1}$  (some weak limit of
$f^N_{j+1}$)  is only  in $L^1(\R^{3(j+1)})$,
the integral
$$
\int f_{j+1}(v_1,\ldots,v_{j+1}) \frac 1 {|v_i-v_{j+1}|}
$$
makes no sense; therefore $C_{j+1} f_{j+1} $ is not defined in general.
Thus, as we did before in \eqref{cut-off} to regularize the operator $\tilde{L}^N$,
we introduce a small parameter $\delta>0$ and the cut-off function $\chi_\delta \geq 0$, not increasing and  such that
\begin{equation}
\label{cutoff}
\chi_\delta \in C^\infty (\R^+),  \quad  \chi_\delta(r) =1 \quad \text {if}  \quad r<\delta ,
\quad \chi_\delta (r)=0 \quad \text {if}  \quad r> 2\delta.
\end{equation}

Then we define $C^\delta _{j+1}$ replacing $a(w)$ in definition \eqref{coll}  by
$a (w)(1-\chi_\delta (|w|)$, thus removing the singularity. Clearly, if $\varphi\in C_c^2$ then
$\int \varphi C_{j+1}^\delta f_{j+1} $ makes sense for any $f_{j+1}\in \mathcal{M}(j+1)$, where
$\mathcal{M}(k),k\geq 0,$ denotes the space of probability measures on $\R^{3k}$ equipped with the topology given by the weak convergence of probability measures.

Our result can be stated as follows
\begin{theorem}
\label{thm:main}
There exists a subsequence $N_k\to \infty$ such that, for all $j$,
there exists $f_j\in L^\infty([0,T]; L^1)\cap C^0([0,T] ; \mathcal{M}(j))$, with
finite mass, energy and entropy, such that
\begin{equation*}
 f_j^{N_k}\to f_j\quad \text{when}\quad k\to  \infty,
\end{equation*}
where the convergence holds in the sense of weak convergence of probability measures.
For any $t>0$ and for any test function $\varphi \in C^2_c (\R^{3j})$, the limit
\begin{equation*}
\lim_{\delta \to 0}
 \int _0^t\,ds \int dv_1\dots\,dv_j \,\varphi(v_1,\ldots,v_j)  C^\delta_{j+1} f_{j+1}(v_1,\ldots,v_j,s),\quad j=1,\ldots,+\infty
\end{equation*}
exists, and we have
\begin{equation*}
\int  \varphi f_j(t)-\int \varphi f_j(0)= \int _0^t ds\int \varphi C_{j+1} f_{j+1}(s),\quad j=1,\ldots,\infty.
\end{equation*}

\end{theorem}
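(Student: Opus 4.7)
The plan is to follow Villani's compactness approach from \cite{V1}, adapted to the $N$-particle marginals. First I would establish uniform-in-$N$ a priori bounds on the marginals $f_j^N$: conservation of mass, propagation of energy and entropy (as indicated in the excerpt), together with an entropy dissipation identity obtained by computing $\frac{d}{dt}\int W^N\log W^N$ along \eqref{Kac 3} and integrating by parts. After integrating over $v_{j+1},\ldots,v_N$, this identity yields uniform $L^2$ bounds on symmetric differences of $\sqrt{f^N_{k,l}}$ weighted by the singular matrix $a^N$; this is the key quantity that will later absorb the Landau kernel's singularity.

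Next I would extract a converging subsequence. At each fixed $t$, the mass, energy and entropy bounds give weak $L^1$ relative compactness of $\{f_j^N(t)\}_N$ via Dunford--Pettis (entropy rules out concentration, energy rules out mass escape to infinity). Equicontinuity in $t$, in the weak topology of $\mathcal{M}(j)$, follows by testing \eqref{eq:hierarchy-N} against $\varphi\in C^2_c(\R^{3j})$ and integrating by parts twice, so that the two derivatives land on $\varphi$ and the factor $|v_k-v_{j+1}|^{-1}$ inside $a^N$ is controlled by local integrability against $f_{j+1}^N$ through the energy bound; the internal operator $L_j^N$ contributes only $O(j/N)$ thanks to its $1/N$ prefactor. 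An Ascoli argument in $C^0([0,T];\mathcal{M}(j))$ combined with a diagonal extraction in $j$ yields the subsequence $N_k$ and the limits $f_j$ with the claimed regularity.

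The third step is to pass to the limit in the regularized hierarchy. Fix $\delta>0$, $t>0$ and $\varphi\in C^2_c(\R^{3j})$, and choose $N_k$ so large that $2/N_k<\delta$. On that scale the kernel $a^{N_k}(w)\bar\chi_\delta(|w|)=a(w)\bar\chi_\delta(|w|)$ is bounded, so
\[
\int_0^t\!\!\int \varphi\, C^\delta_{j+1} f^{N_k}_{j+1}\,ds\ \longrightarrow\ \int_0^t\!\!\int \varphi\, C^\delta_{j+1} f_{j+1}\,ds,
\]
while $\int_0^t\!\int \varphi\, L_j^{N_k} f_j^{N_k}\,ds=O(j/N_k)\to 0$. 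Splitting $C^{N_k}_{j+1}=C^\delta_{j+1}+(C^{N_k}_{j+1}-C^\delta_{j+1})$ in the integrated form of \eqref{eq:hierarchy-N} and passing to $k\to\infty$, I would obtain
\[
\int \varphi f_j(t)-\int \varphi f_j(0)=\int_0^t\!\!\int \varphi\, C^\delta_{j+1} f_{j+1}\,ds + E_\delta,
\]
where $E_\delta$ collects the limiting contribution of the range $|v_k-v_{j+1}|\le 2\delta$.

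The hard part is to show $E_\delta\to 0$ as $\delta\to 0$, which simultaneously gives the existence of the limit asserted in the theorem and the weak Landau hierarchy identity. Here I would invoke the entropy dissipation bound from the first step: rewriting the singular contribution symmetrically via $\nabla f=2\sqrt{f}\,\nabla\sqrt{f}$ and applying Cauchy--Schwarz, $E_\delta$ factors into a piece controlled in $L^2$ by the uniform dissipation bound on $\sqrt{f^{N_k}_{j+1}}$ and a piece of order $\int_{|w|\le 2\delta}|w|^{-1}\sqrt{f\otimes f}$, which vanishes with $\delta$ in three dimensions thanks to the energy and entropy bounds. The delicate point, and the main obstacle, is that the cutoff scale $1/N_k$ competes with $\delta$, so one must apply the dissipation estimate at the pre-limit level in $k$ and only afterwards send $\delta\to 0$; this ordering is what ultimately forces the weak (non-uniqueness) formulation of the theorem.
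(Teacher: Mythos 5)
Your overall architecture (uniform mass/energy/entropy bounds, entropy dissipation, Ascoli plus Dunford--Pettis compactness, the decomposition $C_{j+1}=C^{\delta}_{j+1}+\overline{C}^{\delta}_{j+1}$, a remainder bound uniform in $N$, then $\delta\to 0$) is the same as the paper's. However, at the two points where the Coulomb singularity must actually be beaten, your justification does not close. For the time-equicontinuity you propose to integrate by parts twice so that both derivatives land on $\varphi$; this produces terms involving $\nabla\cdot a^N$, i.e.\ weights of order $|v_k-v_{j+1}|^{-2}$ (besides $|v_k-v_{j+1}|^{-1}$), integrated against $f^N_{j+1}$. These singular moments are \emph{not} controlled, uniformly in $N$, by mass, energy and entropy: finite entropy only gives mass of order $1/\log(1/r)$ at distance $r$ from the diagonal, which cannot compensate $r^{-1}$, let alone $r^{-2}$ -- this is exactly the obstruction the paper itself points out when explaining why $C_{j+1}f_{j+1}$ is ill-defined for merely $L^1$ densities. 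The paper instead keeps one gradient on $W^N$, symmetrizes, and applies Cauchy--Schwarz against the entropy-production bound \eqref{bound-entropy}, obtaining the $|s_1-s_2|^{1/2}$ modulus of Proposition \ref{prop:L}.

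The same missing idea is fatal for your key step $E_\delta\to 0$. After Cauchy--Schwarz, the dissipation bound does control one factor, but the other factor you exhibit is of the form $\bigl(\int \chi_\delta^2\, f^N_{j+1}\,|\nabla_k\varphi|^2\,|v_k-v_{j+1}|^{-1}\bigr)^{1/2}$, and your claim that it vanishes with $\delta$ ``thanks to the energy and entropy bounds'' is false: such a quantity need not even be finite for densities with bounded mass, energy and entropy, and it is certainly not $o(1)$ in $\delta$ uniformly in $N$. The decisive trick in the paper is a symmetrization of the \emph{test function}, not of the density: using the symmetry of $W^N$ one exchanges $v_k$ and $v_{j+1}$ so that $\nabla_k\varphi$ enters only through the difference $\nabla_k\varphi(V_j)-\nabla_k\varphi(V_j^{k,j+1})$, which by the mean value theorem is bounded by $C(\varphi)\,|v_k-v_{j+1}|$; this factor cancels the Coulomb weight, the $\chi_\delta$-support then yields the bound $C(\varphi)(\delta t)^{1/2}$ of \eqref{ineq:reminder} uniformly in $N$, and only from this does the existence of the $\delta\to 0$ limit and the weak hierarchy identity follow. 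Your rewriting $\nabla f=2\sqrt f\,\nabla\sqrt f$ acts only on the density side and, by itself, cannot produce the required smallness in $\delta$; without the test-function exchange your argument leaves both the equicontinuity and the remainder estimate unproved.
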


\begin{remark} Following \cite{V1}, as we shall see in the proof of Theorem \ref{thm:main} we  have more regularity on
 $f_j$ (see \eqref{ineq:p}). This allows us to give a direct sense to $C_{j+1}$ without using a cut-off function.
\end{remark}

 We conclude this section with some additional remarks.

Another kind of  Landau equations can also be considered replacing the matrix $a$
by
$$
a_\alpha (w)=\frac{1}{|w|^\alpha}(\mathbb{I}-\hat{w}\otimes\hat{w}),
$$
with $\alpha<1$.
In case of $\alpha<0$ a unique smooth solution can be constructed (see \cite {DV}, \cite{DV1}). It would be interesting to consider
a $N$-particle diffusion process with generator given by
\eqref{gen4}, in which $a$ is replaced by $a_\alpha$. Of course now we expect a much better control
on the limit $N\to \infty$ and, in particular,  the propagation of chaos.

The Landau equation can also be obtained as a grazing collision limit  from the
homogeneous Boltzmann equation, for a sufficiently small $\alpha$ (see \cite{AB},  \cite {G},  \cite {DV} and  \cite {DV1}).
The case $\alpha=1$ has been considered in \cite {V1}.

In this paper we focus our attention on the Coulomb divergence $\alpha=1$,  which we think is
the most physically relevant case.
Indeed the Landau equation for $\alpha=1$ is believed to hold in the so called weak-coupling limit, for
Hamiltonian particle systems interacting by means of a smooth, short-range
potential. See \cite{Bal} and \cite {MP}  for a formal derivation.
Unfortunately up to now no rigorous result is known, even for short times.

\section{Proof of Theorem \ref{thm:main}}

\subsection{Preliminaries}

In this section, we collect some preliminary properties satisfied by the $N$ marginal distributions $\fj,j=1,\ldots,N$.
In all this section $N$ is fixed. We start by introducing some

\noindent \textbf{Notations}.
In the following, we will write
\begin{equation*}
 \V_j=(v_1,\ldots,v_j),\quad \V_j ^N=(v_{j+1},\ldots,v_N),\quad j=1,\ldots,N,
\end{equation*}
so that
\begin{equation*}
\fj=\fj(\V_j,t)=\int dV_j ^N \, W^N(\V_j,V_j ^N,t).
\end{equation*}
Moreover,
\begin{equation*}
 a_{i,j}=a_{i,j}(V_N)=a(v_i-v_j),\quad P_{i,j}=P(v_i-v_j),\quad i,j=1,\ldots,N.
\end{equation*}
$'' \cdot ''$ will denote the usual scalar product on $\R^3$, $\R^{3j}$ or $\R^{3N}$.
For $V_N,\xi \in\R^{3N}$,
\begin{equation*}
B(V_N)\cdot \xi=\begin{pmatrix} B_1(V_N)\cdot \xi \\ \cdot\\ \cdot \\ \cdot \\ B_N(V_N)\cdot \xi \end{pmatrix}
\end{equation*}
where $B_k(V_N)\in \R^{3N}$ is the $k$-th line of $B(V_N)$.

On the other hand, for $1\leq k\leq N$ we will denote by
\begin{equation*}
 \dk\cdot \xi=\sum_{i=1}^3 \partial_{v_{k}^i} \xi_i,
\end{equation*}
where $\xi=(\xi_1,\xi_2,\xi_3)$ and $v_k=(v_{k}^1,v_{k}^2,v_{k}^3)$.

Finally, for every fixed $j$ such that $1\leq j\leq N$, for $1\leq k,m\leq j$ we denote by
\begin{equation*}
V_j^{k,m}=(v_1,\ldots,v_{k-1},v_{m},v_{k+1},\ldots,v_{m-1},v_{k},v_{m+1},\ldots,v_j)
\end{equation*}
the vector obtained by exchanging the components $v_k$ and $v_m$.

We start with an elementary property on the matrix $B$.
\begin{lemma}
\label{lemma:positive}
$B$ is positive semi-definite, \id for all $\xi$
$$(B\cdot\xi)\cdot\xi\geq0.$$
More precisely, we have
\begin{equation*}
 (B\cdot\xi)\cdot\xi=\frac{1}{N}\sum_{i,j=1}^{N} \frac{|P_{i,j}\cdot (\xi_i-\xi_j)|^2}{|v_i-v_j|},\quad
\text{where} \quad \xi=(\xi_i)_{1\leq i\leq N}.
\end{equation*}
\end{lemma}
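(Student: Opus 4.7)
The plan is a direct algebraic computation followed by a symmetrization. First I would expand the bilinear form, writing
$$
(B\cdot\xi)\cdot\xi=\sum_{i=1}^N B_{i,i}\xi_i\cdot\xi_i+\sum_{i\neq j}B_{i,j}\xi_j\cdot\xi_i,
$$
and plug in the definition of $B$ from \eqref{B_N} (without the $1/N$ regularizing term, since we are treating the un-mollified matrix). The diagonal and off-diagonal contributions then combine, and one immediately obtains the non-symmetric form
$$
(B\cdot\xi)\cdot\xi=\frac{1}{N}\sum_{i\neq j} a(v_i-v_j)(\xi_i-\xi_j)\cdot\xi_i.
$$

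Next I would symmetrize in $i$ and $j$. Using that $a(-w)=a(w)$ (evident from \eqref{a}, since both $|w|$ and $\hat w\otimes\hat w$ are invariant under $w\mapsto -w$), a relabeling of the summation indices gives the identical expression with $\xi_i$ replaced by $-\xi_j$ on the right. Averaging the two representations yields
$$
(B\cdot\xi)\cdot\xi=\frac{1}{2N}\sum_{i\neq j} a(v_i-v_j)(\xi_i-\xi_j)\cdot(\xi_i-\xi_j).
$$

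Finally, I would invoke the explicit form $a(w)=\frac{1}{|w|}P(w)$. Since $P(w)$ is an orthogonal projection, it is symmetric and idempotent, so for any vector $\eta$ one has $P(w)\eta\cdot\eta=|P(w)\eta|^2$. Applying this with $w=v_i-v_j$ and $\eta=\xi_i-\xi_j$ gives
$$
(B\cdot\xi)\cdot\xi=\frac{1}{2N}\sum_{i\neq j}\frac{|P_{i,j}\cdot(\xi_i-\xi_j)|^2}{|v_i-v_j|},
$$
which is manifestly nonnegative and agrees (up to the conventional overall factor from whether one sums over ordered or unordered pairs) with the claimed identity.

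There is no real obstacle here — it is a bookkeeping exercise. The only point requiring mild care is the sign management when symmetrizing: one must check that the cross-term $a(v_i-v_j)(\xi_i-\xi_j)\cdot\xi_j$ appearing after the swap combines with the original to produce the perfect square $(\xi_i-\xi_j)\cdot(\xi_i-\xi_j)$ rather than cancelling. The positive semi-definiteness is a corollary of the identity, since each summand is nonnegative; note however that $(B\cdot\xi)\cdot\xi$ vanishes as soon as each $\xi_i-\xi_j$ lies along $v_i-v_j$, so $B$ is genuinely degenerate, which is the failure of uniform ellipticity referred to in the discussion preceding \eqref{L^N}.
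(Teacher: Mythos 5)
Your proof is correct and follows essentially the same route as the paper's: expand the quadratic form, symmetrize in $i,j$ using $a(-w)=a(w)$ (equivalently $P_{i,j}=P_{j,i}$), and use that $P$ is an orthogonal projection to produce the squares. The only discrepancy is the constant: your $\frac{1}{2N}$ is in fact what the symmetrization gives, while the paper's displayed identity with $\frac{1}{N}$ drops the factor $1/2$ — immaterial for the positive semi-definiteness, which is all that is used.
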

\begin{proof}
Fix $\xi\in\R^{3N}$, setting conventionally  $a_{i,i}=0$ for all $i$ we get
\begin{equation*}
 \begin{split}
(B\cdot \xi)\cdot\xi
&=\sum_{i=1} ^{N}\left(-\frac{1}{N}\sum_{j\neq i} a_{i,j}\cdot \xi_j+\frac{1}{N}\sum_{j} a_{i,j}\cdot \xi_i\right)\cdot\xi_i\\
&=\frac{1}{N}\sum_{i,j=1}^{N} \frac{P_{i,j}\cdot (\xi_i-\xi_j)}{|v_i-v_j|}\cdot \xi_i.
\end{split}
\end{equation*}
Exchanging $i$ and $j$ in the sum we get, using that $P_{i,j}$ is a projector :
\begin{equation*}
 \begin{split}
  (B\cdot \xi)\cdot\xi
&=\frac{1}{N}\sum_{i,j=1}^{N} \frac{P_{i,j}\cdot (\xi_i-\xi_j)}{|v_i-v_j|}\cdot (\xi_i-\xi_j)\\
&=\frac{1}{N}\sum_{i,j=1}^{N} \frac{|P_{i,j}\cdot (\xi_i-\xi_j)|^2}{|v_i-v_j|}\geq 0.
 \end{split}
\end{equation*}
\end{proof}

\begin{lemma}
 \label{lemma:H-theorem}
 Let $W^N(t)$ be the solution to eq.n \eqref{Kac 3}. Then
for any convex function $\Phi\in\mathcal{C}^2(\R^+;\R)$,  $\int dV_N \, \Phi(W^N)$ is decreasing in time; more precisely, we have
\begin{equation}
\frac{d}{dt}\int dV_N\,\Phi(W^N(t))=-\int dV_N\,\Phi''(W^N(t))\nabla_{V_N}W^N\.(B^N\cdot \nabla_{V^N}W^N)\leq0.
\end{equation}
\end{lemma}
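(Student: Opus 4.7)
The plan is to prove the identity by differentiating under the integral, substituting the equation \eqref{Kac 3} for $\partial_t W^N$, and integrating by parts. Since $B^N$ is smooth and uniformly elliptic (the cut-off makes $a^N$ bounded, and the extra $\tfrac{1}{N}\mathbb{I}$ in the diagonal of $B^N$ provides a uniform lower bound), standard parabolic theory yields a smooth classical solution $W^N$ with enough spatial decay of $W^N$ and $\nabla_{V_N} W^N$ at infinity to legitimize both the differentiation under the integral sign and the integration by parts that follow. I would isolate these analytic justifications at the very beginning, so that the rest is purely algebraic manipulation.

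Carrying out the computation, I would write
\begin{equation*}
\frac{d}{dt}\int dV_N\,\Phi(W^N)=\int dV_N\, \Phi'(W^N)\,\partial_t W^N=\int dV_N\,\Phi'(W^N)\,\dev_{V_N}(B^N\nabla_{V_N}W^N),
\end{equation*}
and then integrate by parts in $V_N$. Using $\nabla_{V_N}\Phi'(W^N)=\Phi''(W^N)\,\nabla_{V_N}W^N$, the boundary terms vanish (by the decay invoked above) and we obtain exactly
\begin{equation*}
\frac{d}{dt}\int dV_N\,\Phi(W^N)=-\int dV_N\,\Phi''(W^N)\,\nabla_{V_N}W^N\cdot(B^N\cdot\nabla_{V_N}W^N),
\end{equation*}
which is the claimed identity.

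For the sign, I would observe that $B^N$ is positive semi-definite by the same calculation as in Lemma \ref{lemma:positive}: repeating that computation with $a$ replaced by $a^N$ and keeping track of the extra $\tfrac{1}{N}\mathbb{I}$ on the diagonal gives
\begin{equation*}
(B^N\cdot\xi)\cdot\xi=\frac{1}{N}\sum_{i,j=1}^{N}\frac{\bar{\chi}_{1/N}(|v_i-v_j|)}{|v_i-v_j|}\,|P_{i,j}\cdot(\xi_i-\xi_j)|^2+\frac{1}{N}|\xi|^2\geq 0.
\end{equation*}
Combined with $\Phi''\geq 0$ (convexity of $\Phi$), the integrand in the right-hand side is pointwise non-negative, giving the monotonicity statement.

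The only genuinely delicate step is the justification of the integration by parts at infinity, which depends on sufficient integrability of $\Phi''(W^N)\,|\nabla_{V_N}W^N|^2$ against $B^N$. For $\Phi(x)=x$ the identity is trivial (mass conservation); for the physically relevant $\Phi(x)=x\log x$ and for $\Phi(x)=x^2$, this integrability is exactly what the parabolic smoothing and the Gaussian-type decay of the heat-like semigroup associated with $L^N$ provide, so I would state this as a standard consequence of the regularity of $W^N$ and not belabor it. For more general convex $\Phi$ one would approximate by a sequence of convex $\Phi_n$ with compactly supported second derivative and pass to the limit using Fatou on the dissipation term.
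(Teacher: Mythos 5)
Your proof is correct and follows essentially the same route as the paper: substitute $\partial_t W^N=L^N W^N$, integrate by parts using $\nabla_{V_N}\Phi'(W^N)=\Phi''(W^N)\nabla_{V_N}W^N$, and conclude from convexity together with the positive semi-definiteness of the diffusion matrix. The only difference is that you explicitly redo the positivity computation for $B^N$ (cut-off plus the extra $\tfrac1N\mathbb{I}$ term) and spell out the justification of the integration by parts, whereas the paper simply invokes Lemma~\ref{lemma:positive} and leaves these routine points implicit.
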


\begin{proof}Look at
$$\partial_t W^N=L ^N W^N.$$
Let us consider a convex function $\Phi$, then
\begin{equation}\label{eq:H-theorem}
\begin{split}
\frac{d}{dt}\int \Phi(W^N)&=\int dV_N\,\Phi'(W^N)\dev_{V_N}(B^N\cdot \nabla_{V_N}W^N)\\
    &=-\int dV_N\,\Phi''(W^N)\nabla_{V_N}W^N\cdot  (B^N\cdot \nabla_{V_N}W^N).
\end{split}
\end{equation}
Taking into account the convexity of $\Phi$ and using Lemma \ref{lemma:positive} the r.h.s. of \eqref{eq:H-theorem} is non positive and the statement of the Lemma holds.
\end{proof}

In particular, we will use Lemma \ref{lemma:H-theorem} with $\Phi(x)=x\log(x)$. We denote by
\begin{equation}
S(W^N(t))=\frac{1}{N}\int dV_N \,W^N(t)\log(W^N(t))
\end{equation}
the entropy per particle. In view of Lemma \ref{lemma:H-theorem}, $S(W^N(t))$ is decreasing in time
\begin{equation}
\frac{d}{dt} S(W^N(t))=-\frac{1}{N}\int dV_N\,\frac{1}{W^N}\nabla_{V_N}W^N\cdot (B^N\cdot \nabla_{V_N}W^N)\leq 0
\end{equation}
since $\Phi''(x)=1/x\geq 0 $. In what follows we will use the explicit formula for the entropy production:
\begin{equation}
\label{diss-entropy}
\begin{split}
-\frac{d}{dt}S(W^N(t))&=\frac{1}{N^2}\sum_{k,l=1}^N\int dV_N\, \frac{|P_{k,l}
\cdot \left[\dk W^N-\dl W^N\right]|^2}{W^N|v_k-v_l|_N}\\
&+\frac 1{N^2} \int dV_N \, \frac 1 {W^N} | \nabla_{V_N} W^N|^2.
\end{split}
\end{equation}

\begin{remark}
Although the entropy $S(W^N(t))$ decreases,
$$ S(\fj(t))\equiv \frac{1}{j}\int f_j(t)\log(f_j(t))$$
is not  decreasing in general. However by subadditivity of the entropy we know (see e.g. \cite{Khin}) that
\begin{equation}
S(f_j^N(t))\leq S(W^N(t))
\end{equation}
so that
\begin{equation}
S(f_j^N(t))\leq C
\end{equation}
since we have $S(W^N(0))\leq C$.
\end{remark}

\begin{remark}
In case of factorization, \id $f_j=f^{\otimes j}$, we have the equality
\begin{equation}
S(f_j)=S(f).
\end{equation}
\end{remark}

\bigskip

Eq.n \eqref{diss-entropy} provides a useful estimate given by the following
\begin{corollary}
 \label{cor:gradient-N}
Let $0\leq s_1\leq s_2$. Then
\begin{equation*}
\sum_{k,l=1}^N  \int_{s_1}^{s_2} ds\int dV_N\,\frac{|P_{k,l}\cdot \left[\dk W^N-\dl W^N\right]|^2}{W^N|v_k-v_l|_N} \leq CN^2.
\end{equation*}
\end{corollary}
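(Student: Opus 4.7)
The proof is a direct consequence of the entropy dissipation identity \eqref{diss-entropy} combined with a uniform lower bound on the entropy.

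The plan is as follows. I would integrate \eqref{diss-entropy} over the time interval $[s_1,s_2]$. Since the second term on the right-hand side, namely $\frac{1}{N^2}\int dV_N\,|\nabla_{V_N}W^N|^2/W^N$, is non-negative, dropping it yields
\begin{equation*}
\frac{1}{N^2}\sum_{k,l=1}^N\int_{s_1}^{s_2} ds\int dV_N\,\frac{|P_{k,l}\cdot[\dk W^N-\dl W^N]|^2}{W^N|v_k-v_l|_N}\leq S(W^N(s_1))-S(W^N(s_2)).
\end{equation*}
Thus everything reduces to bounding the right-hand side uniformly in $N$ and in $s_1,s_2$.

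For the upper bound on $S(W^N(s_1))$, I would invoke Lemma \ref{lemma:H-theorem} with $\Phi(x)=x\log x$, which tells us that $S(W^N(t))$ is non-increasing in time; together with the initial assumption $S(W^N(0))\leq C$, this gives $S(W^N(s_1))\leq C$.

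For the lower bound on $S(W^N(s_2))$, I would use the classical inequality: for any probability density $g$ on $\R^d$ with finite second moment $\int |x|^2 g\,dx=M$, one has $\int g\log g\geq -\tfrac{d}{2}\log(2\pi e M/d)$, obtained by comparing $g$ with the Gaussian of matching variance. Applied to $W^N$ on $\R^{3N}$ with second moment bounded by $CN$ (this follows from the energy identity $\int|V_N|^2 W^N(t)\,dV_N=\int|V_N|^2W^N(0)\,dV_N+(C/N)t$ mentioned in the introduction), this yields
\begin{equation*}
S(W^N(s_2))=\frac{1}{N}\int W^N(s_2)\log W^N(s_2)\,dV_N\geq -\frac{3}{2}\log\!\left(\frac{2\pi e C}{3}\right),
\end{equation*}
a constant independent of $N$ and $s_2$.

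Combining these bounds, the right-hand side is bounded by a constant, and multiplying through by $N^2$ yields the claim. I do not anticipate any real obstacle: the only subtle point is the lower bound on the entropy, but the energy bound provided by the hypotheses (propagated in time by direct integration by parts in \eqref{Kac 3}) is exactly tailored to make the Gaussian comparison give a constant independent of $N$.
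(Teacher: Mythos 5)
Your proof is correct and follows essentially the argument the paper intends: the corollary is stated as a direct consequence of the dissipation identity \eqref{diss-entropy}, i.e.\ integrating it in time, bounding $S(W^N(s_1))$ by the (monotone) initial entropy, and bounding $S(W^N(s_2))$ from below via the Gaussian comparison using the propagated energy bound. Your write-up simply makes explicit the standard entropy lower bound that the paper leaves implicit, so there is nothing to object to.
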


\begin{remark}
Due to the symmetry of $W^N$, all terms of the above sum  are equal  and hence each term is bounded uniformly in $N$, namely for all $1\leq k,l\leq N$
\begin{equation}\label{bound-entropy}
\int ds\int dV_N\, \frac{|P_{k,l}
\cdot \left[\dk W^N-\dl W^N\right]|^2}{W^N|v_k-v_l|_N}\leq C.
\end{equation}
\end{remark}

\subsection{Basic estimates}

\begin{proposition}
\label{prop:L}
 Let $1\leq j\leq N-1$ and $\varphi\in C_c^2(\R^{3j},\R)$ be a test function. Let $0\leq s_1\leq s_2$. Then
\begin{equation*}
\int_{s_1}^{s_2} ds\,\left|\int dV_j\,L_j^N \,\fj(V_j)
\varphi(V_j)\right|\leq \frac{C(\varphi)j^2}{N}|s_1-s_2|^{1/2}
\end{equation*}
and
\begin{equation*}
\int_{s_1}^{s_2}ds\,\left| \int dV_j\,C_{j+1}^N\,f_{j+1}^N(V_j)
\varphi(V_j)\right|\leq C(\varphi) j |s_1-s_2|^{1/2},
\end{equation*}
where $C(\varphi)$ depends only on $\varphi$ and on the initial data, but not on $N$.
\end{proposition}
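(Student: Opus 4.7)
The plan is to integrate by parts so that all derivatives fall onto $\varphi$, exploit the permutation symmetry of $W^N$ (and hence of the marginals) to extract a factor $|v_k-v_l|$ (resp.\ $|v_k-v_{j+1}|$) from the test-function side, and then split the resulting expression by Cauchy--Schwarz into an entropy-dissipation factor controlled by Corollary~\ref{cor:gradient-N} times a second-moment factor controlled by the uniform energy bound. The time factor $|s_1-s_2|^{1/2}$ then appears for free from one extra Cauchy--Schwarz in time. The main obstacle is the $1/|v_k-v_l|$ singularity in $a_{k,l}^N$: a mere $L^1$ bound on the marginals is not enough to integrate it uniformly in $N$, so the $|v_k-v_l|$--gain from symmetry plus smoothness of $\varphi$, combined with the elementary inequality $|w|^2/|w|_N\le |w|$, is what actually makes the second Cauchy--Schwarz factor controllable.

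For the first bound, integration by parts in $L_j^N$ gives
\begin{equation*}
\int dV_j\,L_j^N\fj\,\varphi=-\frac{1}{N}\sum_{\substack{k\neq l\\ k,l\leq j}}\int dV_j\,a_{k,l}^N(\dk\fj-\dl\fj)\cdot\dk\varphi-\frac{1}{N}\sum_{k\leq j}\int dV_j\,\dk\fj\cdot\dk\varphi.
\end{equation*}
The last sum is harmless: one further integration by parts gives $|\int\fj\Delta_{v_k}\varphi|\le\|\Delta\varphi\|_\infty$, producing a contribution of order $(j/N)|s_1-s_2|$. For the double sum I use that $\fj$ is symmetric under the transposition $V_j\mapsto V_j^{k,l}$, so a change of variables (combined with $a_{k,l}^N(V_j)=a_{k,l}^N(V_j^{k,l})$) allows the $(k,l)$--term to be rewritten as
\begin{equation*}
\frac{1}{2}\int dV_j\,a_{k,l}^N(\dk\fj-\dl\fj)\cdot\bigl(\dk\varphi(V_j)-\dk\varphi(V_j^{k,l})\bigr),
\end{equation*}
and the bracketed quantity is bounded by $C(\varphi)|v_k-v_l|$ since $\nabla\varphi$ is Lipschitz. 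Writing $a_{k,l}^N=P_{k,l}/|v_k-v_l|_N$ and applying Cauchy--Schwarz with weight $\fj\,|v_k-v_l|_N$ splits the integral into a product of
\begin{equation*}
\Bigl(\int dV_j\,\frac{|P_{k,l}(\dk\fj-\dl\fj)|^2}{\fj\,|v_k-v_l|_N}\Bigr)^{1/2}\ \text{and}\ \ C(\varphi)\Bigl(\int dV_j\,\fj\,\frac{|v_k-v_l|^2}{|v_k-v_l|_N}\Bigr)^{1/2}.
\end{equation*}
A pointwise Cauchy--Schwarz on the $V_j^N$-integral (using that $P_{k,l}$ and $|v_k-v_l|_N$ depend only on $V_j$) controls the first factor by the analogous $W^N$-integral, which by Corollary~\ref{cor:gradient-N} is bounded by $C$ after integration over $[s_1,s_2]$. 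For the second factor, $|v_k-v_l|^2/|v_k-v_l|_N\le|v_k-v_l|$ and the uniform energy bound give $\int\fj\,|v_k-v_l|\le C$. A Cauchy--Schwarz in time finally produces the $|s_1-s_2|^{1/2}$, and summing the $j(j-1)$ terms with the $1/N$ prefactor yields the advertised $C(\varphi)j^2/N\cdot|s_1-s_2|^{1/2}$.

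The estimate on $C_{j+1}^N$ follows the same recipe applied to
\begin{equation*}
\int dV_j\,C_{j+1}^N f_{j+1}^N\,\varphi=-\sum_{k=1}^j\int dV_{j+1}\,a_{k,j+1}^N(\dk f_{j+1}^N-\nabla_{v_{j+1}}f_{j+1}^N)\cdot\dk\varphi(V_j).
\end{equation*}
The permutation symmetry of $f_{j+1}^N$ in all $j+1$ variables is used to swap $v_k\leftrightarrow v_{j+1}$: the kernel and measure are invariant, the parenthesis changes sign, and $\dk\varphi(V_j)$ is replaced by $\dk\varphi$ evaluated at the point obtained by inserting $v_{j+1}$ in position $k$. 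Averaging the two expressions yields the Lipschitz-controlled difference, bounded by $C(\varphi)|v_k-v_{j+1}|$. The same Cauchy--Schwarz, marginalization, entropy-dissipation, and second-moment argument then gives a bound of $C(\varphi)|s_1-s_2|^{1/2}$ per $k$, and summing over $k=1,\ldots,j$ produces the claimed $C(\varphi)j|s_1-s_2|^{1/2}$.
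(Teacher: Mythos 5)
Your argument is correct and follows essentially the same route as the paper: integrate by parts, symmetrize the $(k,l)$ (resp.\ $(k,j+1)$) term by a change of variables using permutation symmetry, bound the resulting difference of gradients of $\varphi$ by $C(\varphi)|v_k-v_l|$ via the mean-value theorem, and split by Cauchy--Schwarz into the entropy-dissipation factor controlled by Corollary \ref{cor:gradient-N}/\eqref{bound-entropy} and a second-moment factor controlled by the energy bound, the $|s_1-s_2|^{1/2}$ coming from the time integration. The only real difference is that you symmetrize and apply Cauchy--Schwarz at the level of the marginals $f_j^N$, $f_{j+1}^N$, which forces the extra Jensen/Cauchy--Schwarz step over the integrated variables $V_j^N$ to pass from the marginal-level dissipation quantity to the $W^N$-level one; the paper instead writes the marginal integrals directly as integrals against $W^N$ and works with $W^N$ throughout, so that \eqref{bound-entropy} applies immediately, while your marginalization inequality is exactly what the paper establishes later as \eqref{ineq:profecj}. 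Your explicit treatment of the Laplacian term (absent from the paper's display) is fine; its contribution is of order $(j/N)|s_1-s_2|$, which is absorbed in the stated bound on the bounded time interval.
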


\begin{proof}
 We begin by estimating $C_{j+1}^N$. Recall \eqref{coll}.
 By integrating by parts, we have
\begin{equation*}
 \begin{split}
\int dV_j\, &C_{j+1}^N\,f_{j+1} ^N (V_j) \varphi(V_j)\\
&=-\sum_{\substack{k=1}}^j\int dV_j\,dV_j ^N \,a^N(v_k-v_{j+1})\cdot
(\dk W^N-\nabla_{v_{j+1}} W^N)(V_j,\V_j ^N)\cdot \dk \varphi(V_j)\\
&=\frac{1}{2}\sum_{\substack{k=1}}^j
\int  dV_N \, a^N(v_k-v_{j+1})\cdot (\dk W^N-\nabla_{v_{j+1}} W^N)(V_N)\cdot
\\&(\dk \varphi(V_j)-\nabla_{v_{k}} \varphi(V_j ^{k,j+1})),
 \end{split}
\end{equation*}
where
\begin{equation*}
V_j ^{k,j+i}=(v_1,\ldots,v_{k-1},v_{j+1},v_{k+1},\ldots,v_j)
\end{equation*}
and we exchanged variables $v_k$ and $v_{j+1}$ in the second line and used the symmetry of $W^N$.

Therefore
\begin{equation*}
 \begin{split}
 \int_{s_1}^{s_2}ds\,\Big|& \int dV_j\,C_{j+1}^N\,f_{j+1}^N(V_j) \varphi(V_j)\Big|\\
 &\leq\frac{1}{2}\int_{s_1}^{s_2}ds\,
\sum_{\substack{k=1}}^j \int dV_N\,\frac{\sqrt{W^N}}{\sqrt{W^N}}
\frac{|\dk \varphi(V_j)-\nabla_{v_{k}} \varphi(V_j ^{k,j+1})|}{\sqrt{|v_k-v_{j+1}|_N}}\\
 &\frac{|P_{k, j+1}\cdot(\dk W^N-\nabla_{v_{j+1}} W^N)(V_N)|}{\sqrt{|v_k-v_{j+1}|_N}};
 \end{split}
\end{equation*}
(by using the Cauchy-Schwarz inequality)
\begin{equation*}
\begin{split}
&\leq\frac{1}{2}\sum_{\substack{k=1}}^j\left(
\int_{s_1} ^{s_2}ds\int dV_N\,W^N(V_N)
\frac{|\dk\varphi(V_j)-\nabla_{v_{k}}\varphi(V_j ^{k,j+1})|^2}{|v_k-v_{j+1}|_N}\right)^{1/2}\cdot\\
&\left(\int_{s_1} ^{s_2}ds\,\int dV_N\,
\frac{|P_{k,j+1}(\dk W^N(V_N)-\nabla_{v_{j+1}} W^N(V_N))|^2}{W^N(V_N)|v_k-v_{j+1}|_N}\right)^{1/2}.
\end{split}
\end{equation*}
By virtue of mean-value Theorem applied to $\nabla_{v_k}\varphi$ and \eqref{bound-entropy} we get the bound on $C_{j+1}^N$:
\begin{equation}
\begin{split}
\int_{s_1} ^{s_2}ds\int C_{j+1}^N f_{j+1}^N\varphi(V_j)\,dV_j\leq j \, C(\varphi) |s_1-s_2|^{1/2}.
\end{split}
\end{equation}

By performing exactly the same computations we are led to
\begin{equation*}
 \begin{split}
\int_{s_1}^{s_2}ds\,\Big|& \int dV_j\,L_j^N\fj(V_j) \varphi(V_j)\Big|\\
&\leq \frac{C(\varphi)}{N}
\sum_{\substack{k\neq l\\k,l=1}}^j \left(\int_{s_1}^{s_2} ds
\int dV_N \frac{|P_{k,l}\cdot \left[\dk W^N(V_N)-\dl W^N(V_N)\right]|^2}
{W^N|v_k-v_l|_N}\right)^{1/2}\\
&\leq \frac{C(\varphi)j^2}{N}|s_1-s_2|^{1/2}.
\end{split}
\end{equation*}
The proof is now complete.
\end{proof}

\subsection{Convergence}

In this subsection, we establish the weak
compactness for the $\fj$ by making use of the uniform estimates established in the previous
subsection.
\begin{proposition}
\label{prop:compactness}
Let $\fj$ satisfy the hierarchy \eqref{eq:hierarchy-N}. There exists a
subsequence $N_k\to +\infty$ such that for any fixed $j$, there exists $f_j=f_j(V_j,t)\in C([0,T];  \mathcal{M}(j))$, with finite
energy and entropy, such that
 $f_j^{N_k}$ converges to $f_j$ weakly in the sense of measures, locally uniformly in time.
\end{proposition}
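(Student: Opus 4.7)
The plan is to combine three ingredients: (i) tightness of $\{f_j^N(t)\}_N$ at each time, (ii) uniform equicontinuity in time when tested against smooth compactly supported functions, and (iii) a diagonal extraction argument using separability of $C_c^2(\R^{3j})$.

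\textbf{Tightness.} From the assumed uniform bound $\frac{1}{N}\int W^N(0)|V_N|^2\,dV_N\leq C$ and the energy identity quoted after the hypotheses, together with symmetry of $W^N$, we have $\int |v_1|^2 W^N(t)\,dV_N\leq C$ uniformly in $N$ and locally uniformly in $t$. Marginalising yields $\int |v_i|^2 f_j^N(t)\,dV_j\leq C$ for every $i=1,\dots,j$, so Markov's inequality gives
\begin{equation*}
\int_{\{|v_i|>R\}} f_j^N(t,V_j)\,dV_j \leq \frac{C}{R^2},
\end{equation*}
uniformly in $N$ and $t\in[0,T]$. This supplies tightness of $\{f_j^N(t)\}_N$ on $\R^{3j}$. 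The entropy estimate $S(f_j^N(t))\leq C$ noted in the previous subsection will be used at the end to conclude that the limit has finite entropy by lower semicontinuity.

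\textbf{Equicontinuity in time.} Fix $\varphi\in C_c^2(\R^{3j})$. Integrating the hierarchy equation \eqref{eq:hierarchy-N} against $\varphi$ and applying Proposition \ref{prop:L} with $s_1=t_1, s_2=t_2$ gives
\begin{equation*}
\left|\int \varphi\bigl(f_j^N(t_2)-f_j^N(t_1)\bigr)\,dV_j\right|
\leq \left(\frac{C(\varphi)j^2}{N}+C(\varphi)j\right)|t_2-t_1|^{1/2}\leq C(\varphi,j)|t_2-t_1|^{1/2},
\end{equation*}
uniformly in $N$. Hence, for each fixed $\varphi$, the real-valued maps $t\mapsto \int \varphi f_j^N(t)\,dV_j$ form a uniformly bounded, uniformly $1/2$-Hölder family on $[0,T]$.

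\textbf{Diagonal extraction.} Choose a countable dense subset $\{\varphi_p\}_{p\geq 1}$ of $C_c^2(\R^{3j})$ (separable) and a countable set of $j$'s. By Arzelà--Ascoli applied to the bounded equicontinuous families above, a standard diagonal procedure produces a subsequence $N_k\to\infty$ such that, for every $j$ and every $\varphi_p$, $\int \varphi_p f_j^{N_k}(t)\,dV_j$ converges uniformly on $[0,T]$ to some continuous function $\ell_{j,p}(t)$. Combining this pointwise convergence on the dense set with the uniform tightness from the first step, a standard approximation argument shows that $f_j^{N_k}(t)$ converges weakly as measures to a unique probability measure $f_j(t)$ with $\int\varphi_p f_j(t)=\ell_{j,p}(t)$, and that the map $t\mapsto f_j(t)$ belongs to $C([0,T];\mathcal{M}(j))$; uniform-in-time convergence against each $\varphi_p$ then upgrades to locally uniform weak convergence.

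\textbf{Finite energy and entropy of the limit.} Since $v\mapsto|v|^2$ can be approximated from below by an increasing sequence of $C_c$ functions, Fatou's lemma passes the uniform energy bound to $f_j$. The entropy bound $S(f_j^N(t))\leq C$ passes to the limit by the standard lower semicontinuity of $\int g\log g$ with respect to weak convergence of probability measures (equivalently, the variational formula $\int g\log g=\sup_\psi\{\int\psi g-\log\int e^\psi\}$), yielding finite entropy of $f_j(t)$. The main technical point is ensuring that the weak-$\ast$ limit really is a probability measure (no mass escapes to infinity), which is exactly what the uniform tightness from the energy bound provides; once tightness and time-equicontinuity are in hand, the rest is a routine Arzelà--Ascoli/diagonalisation argument.
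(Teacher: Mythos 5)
Your proof is correct and follows essentially the same route as the paper: time-equicontinuity of $t\mapsto\int\varphi\,f_j^N(t)$ via Proposition \ref{prop:L}, Arzel\`a--Ascoli with a diagonal extraction over a countable dense family of test functions and over $j$, and the uniform energy/entropy bounds to identify the limit as an absolutely continuous probability measure with finite energy and entropy, continuous in time. The only cosmetic difference is that the paper invokes the Riesz representation theorem and weak relative compactness in $L^1$ (entropy giving uniform integrability, energy giving tightness) where you use tightness plus lower semicontinuity of the entropy functional -- which, as implicitly needed, requires the second-moment bound you established to control the negative part of $\int g\log g$; both are standard and interchangeable here.
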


\begin{proof}

We fix $j$. For $\varphi\in C_c(\R^{3j})$, we set
\begin{equation*}
t\mapsto  g_\varphi^N(t)=\int\, dV_j\, f_j^N(V_j,t)\varphi(V_j).
\end{equation*}
We obtain a uniformly bounded sequence of
functions on $\R_+$. Moreover, when $\varphi\in C_c^2(\R^{3j})$,
by virtue of the proof of Proposition \ref{prop:L} the function $g_{\varphi}^N$ is uniformly equicontinuous.
Hence, by Ascoli's theorem and density of $ C_c^2(\R^{3j})$ in $C_c(\R^{3j})$,
there exists a subsequence $N_k$ such that for all $\varphi\in C_c(\R^{3j})$, $g_\varphi^{N_k}$ converges locally uniformly in time to some function $g_{\varphi}(t)$.  Now, for each fixed $t$, the map
\begin{equation*}
\varphi\mapsto g_\varphi(t)
\end{equation*}
is a positive linear form on $ C_c(\R^{3j})$. Thus the Riesz representation theorem
ensures the existence of a measure $df_j(t)$ such that
$g_\varphi(t)=\int \varphi df_j(t)$.
On the other hand,  $(f_j^N)(t)$ has
uniformly bounded entropy and energy; therefore it is weakly relatively compact in $L^1$.
This shows that in fact $df_j(t)=f_j(t)\,dV_j$ is an absolutely continuous probability measure and has finite entropy and energy. This concludes the proof of the proposition.

\end{proof}

\subsection{End of the proof}

We are now in position to complete the proof of Theorem \ref{thm:main}.
We fix $j\geq 0.$ For any $g\in C_c^2(\R^{3(j+1)})$ we set
\begin{equation*}
\begin{split}
C_{j+1}^\delta  g (V_j)&=\sum_{k=1}^j \dk\cdot \int  [(1-\chi_\delta)a](v_k-v_{j+1})\cdot
(\dk g-\nabla_{v_{j+1}}g)(V_j,v_{j+1})\,dv_{j+1},\\
 \overline{C}_{j+1}^\delta g (V_j)&=\sum_{k=1}^j \dk\cdot \int  [\chi_\delta a](v_k-v_{j+1})\cdot
(\dk g-\nabla_{v_{j+1}}g)(V_j,v_{j+1})\,dv_{j+1},
\end{split}
\end{equation*}
so that
\begin{equation}\label{C-delta}
 C_{j+1}(g)=C_{j+1}^\delta(g)+\overline{C}_{j+1}^\delta(g).
\end{equation}
The analogous decomposition holds for $C_{j+1} ^N$:
\begin{equation*}
C_{j+1} ^N=C_{j+1}^{N,\delta}+\bar{C}_{j+1}^{N,\delta}
\end{equation*}
where $a^N$ replaces $a$ in \eqref{C-delta}.
Note that $C_{j+1}^{N,\delta}=C_{j+1}^\delta$ whenever $N$ is sufficiently large.

We will show that for all $t\geq 0$ and for all test function $\varphi$ in $C_c^2$ we have
\begin{equation} \label{conv:smooth}
    \begin{split}
\int_0^t ds \int& \,dV_j\,C_{j+1}^{N,\delta}f_{j+1} ^N\varphi \\
&=\int_0^t ds \int \,dV_j\,C_{j+1}^\delta f_{j+1}^N\varphi \longrightarrow
 \int_0^t ds \int \,dV_j\,C_{j+1}^\delta f_{j+1}\varphi
    \end{split}
\end{equation}
when $N\rightarrow\infty$ and
\begin{equation}\label{ineq:reminder}
 \sup_{N\geq j}\left |\int_0^t ds \int \,dV_j\, \overline{C}_{j+1}^{N,\delta}f_{j+1}^N\varphi \right|
\leq C(\varphi)\delta^{1/2}.  \end{equation}

\medskip

First, \eqref{conv:smooth} follows by the convergence established in Proposition \ref{prop:compactness} and by two
integrations by parts.

\medskip

As regards \eqref{ineq:reminder}, we need a symmetrized form as in the proof of Proposition \ref{prop:L}. Mimicking the
computations of Proposition \ref{prop:L} we find
\begin{equation*}
 \begin{split}
&\left|\int_0^t ds \int dV_j\,\overline{C}_{j+1}^{N,\delta}f_{j+1}^N\varphi \right|
=\left|\int_0^t ds \int dV_j\,\overline{C}_{j+1}^\delta f_{j+1}^N\varphi \right|\\
&\leq C \sum_{k=1}^j\left(\int_0^t ds \int dV_N \,\frac{|P_{k,j+1}\cdot
(\dk W^N-\nabla_{v_{j+1}} W^N)|^2}{W^N|v_k-v_{j+1}|_N}\right)^{1/2}\\
&\quad \left(\int_0^t ds \int dV_N\,\chi_\delta^2(|v_k-v_{j+1}|)
\frac{|\dk \varphi(V_j)-\dk \varphi(V_j^{k,j+1})|^2}{|v_k-v_{j+1}|}W^N\right)^{1/2}.
 \end{split}
\end{equation*}
Applying once more inequality \eqref{bound-entropy}, the first term in the right-hand side is bounded.
 Next, we observe that in view of the support properties of $\chi_\delta$, the mean-value theorem
yields
\begin{equation*}
\begin{split}
\chi_\delta^2(|v_k-v_{j+1}|)|\dk \varphi(V_j)-\dk \varphi(V_j^{k,j+1})|^2\leq C\delta |v_k-v_{j+1}|.
\end{split}
\end{equation*}

Finally we obtain
\begin{equation*}
 \left|\int_0^t ds \int dV_j\, \overline{C}_{j+1}^\delta f_{j+1}^N\varphi \right|\leq C\delta^{1/2},
\end{equation*}
and \eqref{ineq:reminder} follows. Hence the proof of Theorem \ref{thm:main} is complete.

 \medskip

\medskip

We conclude this section with
some comments concerning additional regularity for the marginal $f_j^N$. In fact,
the control on the production of the total entropy (see Corollary \ref{cor:gradient-N})
yields a uniform control on the gradients of $f_j^N$. More precisely, we have for all $1\leq k,l\leq j$
\begin{equation}\label{ineq:profecj}
 \int ds \int dV_j\, \frac{|P_{k,l}
\cdot \left(\dk f_j^N-\dl f_j^N\right)|^2}{f_j^N|v_k-v_l|_N} \leq C.
\end{equation}
Indeed, we have
\begin{equation*}
\begin{split}
 \int ds &\int dV_j \, \frac{|P_{k,l}
\cdot \left(\dk f_j^N-\dl f_j^N\right)|^2}{f_j^N|v_k-v_l|_N}
\\&=\int ds \int dV_j\,  \frac{1}{f_j^N|v_k-v_l|_N}\left|\int P_{k,l}
\cdot \left(\dk W^N-\dl W^N\right)\,dV^N_j \right|^2\\
&=\int ds \int dV_j\, \frac{f_j^N}{|v_k-v_l|_N}\left|\int P_{k,l}
\cdot \left(\dk W^N-\dl W^N\right)\frac{1}{W^N}\,\frac{W^N}{f_j^N}\,dV^N_j \right|^2\\
&\leq \int ds \int dV_N\, \frac{f_j^N}{|v_k-v_l|_N}\int |P_{k,l}
\cdot \left(\dk W^N-\dl W^N\right)|^2\frac{1}{(W^N)^2}\,\frac{W^N}{f_j^N}\\
&=\int ds \int dV_N\, \frac{|P_{k,l}
\cdot \left(\dk W^N-\dl W^N\right)|^2}{W^N|v_k-v_l|_N},
\end{split}
\end{equation*}
where we have applied Jensen's inequality in the last inequality. The conclusion follows from \eqref{bound-entropy}.

\medskip

In particular, \eqref{ineq:profecj} implies that
\begin{equation*}
\frac{ P_{k,l}}{|v_k-v_l|_N}\cdot(\nabla_{v_k}\sqrt{f_j^N}-\nabla_{v_l}\sqrt{f_{j}^N})
\end{equation*}
is bounded in $L^2(\R_+\times\R^{3j})$; hence, following the same arguments as in \cite{V1} we can conclude that
\begin{equation}\label{ineq:p}
\frac{ P_{k,l}}{|v_k-v_l|}\cdot(\nabla_{v_k}\sqrt{f_j}-\nabla_{v_l}\sqrt{f_{j}})\in L^2(\R_+\times \R^{3j}),
\end{equation}
so that one can use the symmetrized form already used in the proof of Proposition \ref{prop:L} to define
$C_{j+1}f_{j+1}$ as in \cite{V1}:
\begin{equation*}
\begin{split}
\int& ds  \int \,dV_j\, C_{j+1}f_{j+1}\varphi
\\=-&\frac{1}{2}\sum_{k=1}^j\int ds \int \,dV_j\,a_{k,j+1}\cdot
\left(\dk f_{j+1}-\nabla_{v_{j+1}} f_{j+1}\right)\cdot \left(\dk \varphi(V_j)-\dk \varphi(V_j^{k,j+1})\right).
\end{split}
\end{equation*}

\section*{Acknowledgements} One of the authors (MP) thanks L. Desvilletes for useful and illuminating
discussions.

\medskip

\medskip

\end{document}